\DeclareFontFamily{OT1}{cyr}{}
\DeclareFontShape{OT1}{cyr}{m}{n}
   {  <5> <6> <7> <8> <9> gen * wncyr
      <10> <10.95> <12> <14.4> <17.28> <20.74> <24.88> wncyr10}{}
\DeclareFontShape{OT1}{cyr}{m}{it}
    {
       <5> <6> <7> <8> <9> gen * wncyi
      <10> <10.95> <12> <14.4> <17.28> <20.74> <24.88>wncyi10
      }{}
\DeclareFontShape{OT1}{cyr}{m}{ss}
    {
       <5> <6> <7> <8> wncyss8
       <9> wncy9
      <10> <10.95> <12> <14.4> <17.28> <20.74> <24.88>wncyss10
      }{}
\DeclareFontShape{OT1}{cyr}{m}{sc}
    {
       <5> <6> <7> <8> <9> <10> <10.95> <12> <14.4> <17.28> <20.74> <24.88>wncysc10
      }{}
\DeclareFontShape{OT1}{cyr}{bx}{n}
   {
       <5> <6> <7> <8> <9> gen * wncyb
      <10> <10.95> <12> <14.4> <17.28> <20.74> <24.88>wncyb10
      }{}
\DeclareTextFontCommand{\textcyr}{\fontfamily{cyr}\selectfont}
\providecommand{\cprime}{'}
\newtheorem{thm}{Theorem}
\newtheorem{cor}[thm]{Corollary}
\theoremstyle{definition}
\newtheorem{example}[thm]{Example}
\theoremstyle{remark}
\providecommand{\myeprint}[2]{E-print: \href{#1}{\texttt{#2}}}
\providecommand{\diag}{\mathop {\operator@font diag}\nolimits}
\providecommand{\such}{\,\mid\,}
\providecommand{\map}[1]{\mathsf{#1}}
\providecommand{\norm}[2][\relax]{\left\|#2\right\|\ifx#1\relax\else_{#1}\fi}
\providecommand{\modulus}[2][\relax]{\left| #2 \right|\ifx#1\relax\else_{#1}\fi}
\providecommand{\uir}[3][0]{\ifcase #1{\rho^{#2}_{#3}}%
\or {\breve{\rho}^{#2}_{#3}}%
\or {\tilde{\rho}^{#2}_{#3}}\fi}
\providecommand{\linv}[2][\relax]{\mathfrak{L}^{#2}_{#1}}
\providecommand{\spec}[1][]{\ensuremath{\mathbf{sp}}\,}
\providecommand{\SU}[1][1,1]{\ensuremath{\mathrm{SU}(#1)}}
\providecommand{\scalar}[3][\relax]{\left\langle #2,#3 
        \right\rangle\ifx#1\relax\else_{#1}\fi}
\providecommand{\Space}[3][]{\ensuremath{\mathbb{#2}^{#3}_{#1}{}}}
  \providecommand{\FSpace}[3][]{\ensuremath{\ifx#2l \ell_{#3}^{#1}{}\else
  \mathcal{#2}_{#3}^{#1}{}\fi}} 
\providecommand{\rmi}{\mathrm{i}}
\providecommand{\myh}{h}
\providecommand{\myhbar}{\hslash}
\providecommand{\algebra}[1]{\ensuremath{\mathfrak{#1}}}
  \providecommand{\limlike}[1]{\mathop {\operator@font #1}}
  \providecommand{\loglike}[1]{\mathop {\operator@font #1}\nolimits}
\providecommand{\MR}[1]{\textbf{MR}~\href{http://www.ams.org/mathscinet-getitem?mr=#1}{\#~#1}}
\providecommand{\Zbl}[1]{\textbf{Zbl}~\href{http://www.emis.de:80/cgi-bin/zmen/ZMATH/en/zmathf.html?first=1&maxdocs=3&type=html&an=#1&format=complete}{\#~#1}}
\providecommand{\eprint}[2]{E-print: \href{#1}{\texttt{#2}}}
\providecommand{\modulus}[2][\relax]{\left| #2 \right|\ifx#1\relax\else_{#1}\fi}
\providecommand{\oper}[1]{\mathcal{#1}}
\def\ifundefined#1{\expandafter\ifx\csname#1\endcsname\relax}
\begin{document}

%
%
%
%
%
%
%
%
%

\title{Uncertainty and Analyticity}

\author[Vladimir V. Kisil]
{\href{http://www.maths.leeds.ac.uk/~kisilv/}{Vladimir V. Kisil}}

\address{%
School of Mathematics,
University of Leeds,
Leeds LS2\,9JT,
England\newline
Web: \url{http://www.maths.leeds.ac.uk/~kisilv/}}
\email{\href{mailto:kisilv@maths.leeds.ac.uk}{kisilv@maths.leeds.ac.uk}}
\thanks{On  leave from Odessa University.}
\subjclass{Primary 81P05; Secondary 22E27}

\keywords{Quantum mechanics, classical mechanics, Heisenberg
  commutation relations, observables, Heisenberg group,
  Fock--Segal--Bargmann space, \(\SU\), Hardy space}

\date{December 1, 2013}

\begin{abstract}
  We describe a connection between minimal uncertainty states and
  holomorphy-type conditions on the images of the respective wavelet
  transforms. The most familiar example is the Fock--Segal--Bargmann
  transform generated by the Gaussian, however, this also occurs under
  more general assumptions.
\end{abstract}

\maketitle

\section{Introduction}
\label{sec:introduction}

There are two and a half main examples of reproducing kernel spaces of
analytic function. One is the Fock--Segal--Bargmann (FSB) space and
others (one and a half)---the Bergman and Hardy spaces. The first
space is generated by the Heisenberg
group~\citelist{\cite{Kisil11c}*{\S~7.3} \cite{Folland89}*{\S~1.6}},
two others---by the group \(\SU\)~\cite{Kisil11c}*{\S~4.2} (this
explains our way of counting).

Those spaces have the following properties, which make their study
particularly pleasant and fruitful:
\begin{enumerate}
\item \label{item:group}
  There is a group, which acts transitively on functions'
  domain.
\item \label{item:reproducing-kernel}
  There is a reproducing kernel.
\item \label{item:spac-cons-holom}
  The space consists of holomorphic functions.
\end{enumerate}
Furthermore, for FSB space there is the following property:
\begin{enumerate}
\item[iv.] \label{item:uncertainity}
  The reproducing kernel is generated by a function, which minimises
  the uncertainty for coordinate and momentum observables.
\end{enumerate}
It is known, that a transformation group is responsible for the
appearance of the reproducing
kernel~\cite{AliAntGaz00}*{Thm.~8.1.3}. This paper shows that the last
two properties are equivalent and connected to the group as well.

\section{The Uncertainty Relation}
\label{sec:uncert-relat}

In quantum mechanics~\cite{Folland89}*{\S~1.1}, an observable
(self-adjoint operator on a Hilbert space \(\FSpace{H}{}\)) \(A\) produces the
expectation value \(\bar{A}\) on a state (a unit vector)
\(\phi\in \FSpace{H}{}\) by \(\bar{A}=\scalar{A\phi}{\phi}\). Then,
the dispersion is evaluated as follow: 
\begin{equation}
  \label{eq:dispersion}
  \Delta_\phi^2(A)=\scalar{(A-\bar{A})^2\phi}{\phi}=
  \scalar{(A-\bar{A})\phi}{(A-\bar{A})\phi}=
  \norm{(A-\bar{A})\phi}^2.
\end{equation}
The next theorem links obstructions of exact simultaneous
measurements with non-commutativity of observables.
\begin{thm}[The Uncertainty relation]
  \label{th:uncertainty}
  If \(A\) and \(B\) are self-adjoint operators on a Hilbert space \(\FSpace{H}{}\), then
  \begin{equation}
    \label{eq:uncertainty-abstract}
    \textstyle
    \norm{(A-a)u}\norm{(B-b)u}\geq \frac{1}{2} \modulus{\scalar{(AB-BA)u}{u}},
  \end{equation}
  for any \(u\in \FSpace{H}{}\) from the domains of \(AB\) and \(BA\) and \(a\),
  \(b\in\Space{R}{}\). Equality holds precisely when \(u\) is a
  solution of \(((A-a)+\rmi r (B-b))u=0\) for some real \(r\).
\end{thm}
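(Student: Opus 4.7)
My plan is to reduce to the normalised case $a=b=0$ and then to execute the standard Cauchy--Schwarz argument that underlies the Heisenberg--Robertson uncertainty bound. First I would replace $A$ by $A-a$ and $B$ by $B-b$: these shifts preserve self-adjointness (since $a,b\in\Space{R}{}$) and leave the commutator unchanged, because scalar multiples of the identity commute with everything. Thus the problem reduces to showing $\tfrac{1}{2}\modulus{\scalar{(AB-BA)u}{u}} \leq \norm{Au}\norm{Bu}$ for self-adjoint $A$, $B$.

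Second, I would derive the identity
\begin{equation*}
  \scalar{(AB-BA)u}{u} = \scalar{Bu}{Au} - \overline{\scalar{Bu}{Au}} = 2\rmi \operatorname{Im}\scalar{Bu}{Au}
\end{equation*}
by moving $A$ and $B$ across the inner product using self-adjointness; the hypothesis that $u$ lies in the domains of $AB$ and $BA$ is exactly what legitimises these manipulations. Taking absolute values and chaining the two elementary bounds $\modulus{\operatorname{Im} z}\leq \modulus{z}$ and $\modulus{\scalar{Bu}{Au}}\leq \norm{Au}\norm{Bu}$ (Cauchy--Schwarz) then delivers the asserted inequality at once.

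For the equality clause I would trace back when each of the two inequalities above is tight. Cauchy--Schwarz equality forces colinearity $Au = \mu Bu$ for some scalar $\mu$, giving $\scalar{Bu}{Au} = \bar\mu\norm{Bu}^2$; tightness of $\modulus{\operatorname{Im} z}=\modulus{z}$ then forces this scalar to be purely imaginary, so $\mu = -\rmi r$ with $r\in\Space{R}{}$, which rearranges to $((A-a)+\rmi r(B-b))u = 0$ after restoring the shifts. The only fiddly point I anticipate is the degenerate case in which $Au$ or $Bu$ vanishes: there one must verify that the displayed equation still admits a suitable real $r$ (for instance $r=0$ when $Au=0$), but no substantive obstruction should arise beyond this routine bookkeeping.
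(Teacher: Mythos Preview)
Your proposal is correct and essentially identical to the paper's proof: rewrite $\scalar{(AB-BA)u}{u}$ as $2\rmi\,\Im\scalar{(B-b)u}{(A-a)u}$ using self-adjointness, then chain $\modulus{\Im z}\le\modulus{z}$ with Cauchy--Schwarz, reading off the equality case as proportionality by a purely imaginary scalar. The only cosmetic differences are that the paper keeps the shifts $a,b$ in place rather than first normalising to $a=b=0$, and that your explicit remark on the degenerate case where $(A-a)u$ or $(B-b)u$ vanishes is in fact more careful than the paper's own treatment.
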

\begin{proof} The proof is well-known~\cite{Folland89}*{\S~1.3}, but
  it is short, instructive and relevant for the following discussion,
  thus we include it in full. We start from simple algebraic
  transformations:
  \begin{eqnarray}
    \scalar{(AB-BA)u}{u}&=&\scalar{(A-a)(B-b)-(B-b)(A-a))u}{u}\nonumber \\
    &=&\scalar{(B-b)u}{(A-a)u}-\scalar{(A-a))u}{(B-b)u}\nonumber \\
    &=&2\rmi \Im\scalar{(B-b)u}{(A-a)u}
  \end{eqnarray}
  Then by the Cauchy--Schwartz inequality:
  \begin{displaymath}
    \label{eq:uncert-rel-proof}
    \textstyle\frac{1}{2}\scalar{(AB-BA)u}{u}\leq
    \modulus{\scalar{(B-b)u}{(A-a)u}}\leq\norm{(B-b)u}\norm {(A-a)u}.
  \end{displaymath}
  The equality holds if and only if \((B-b)u\) and \((A-a)u\) are
  proportional by a \emph{purely imaginary} scalar.
\end{proof}
The famous application of the above theorem is the following
fundamental relation in quantum
mechanics. Recall~\cite{Folland89}*{\S~1.2}, that the one-dimensional
Heisenberg group \(\Space{H}{1}\) consists of points
\(
(s,x,y)\in\Space{R}{3}\), 
with the group law:
  \begin{equation}
    \label{eq:H-n-group-law}
    \textstyle
    (s,x,y)*(s',x',y')
    =(s+s'+\frac{1}{2}(xy'-x'y),x+x',y+y').
  \end{equation} 
This is a nilpotent step two Lie group. By the Stone--von Neumann
theorem~\cite{Folland89}*{\S~1.5}, any infinite-dimensional
unitary irreducible representation of  \(\Space{H}{1}\) is unitary
equivalent to the Schr\"odinger representation \(\uir{}{\myhbar}\) in
\(\FSpace{L}{2}(\Space{R}{})\) parametrised by the Planck constant
\(\myhbar\in\Space{R}{}\setminus\{0\}\). A physically consistent form
of \(\uir{}{\myhbar}\) is~\cite{Kisil10a}*{(3.5)}: 
\begin{equation}
  \label{eq:H1-schroedinger-rep-q-space}
  [\uir{}{\myhbar}(s,x,y) {f}\,](q)=e^{-2\pi\rmi\myhbar (s+xy/2)
    -2\pi\rmi x q}\,{f}(q+{\myhbar} y).  
\end{equation}
Elements of the Lie algebra \(\algebra{h}_1\), corresponding to the
infinitesimal generators \(X\) and \(Y\) of one-parameters subgroups
\((0,t/(2\pi),0)\) and \((0,0,t)\) in \(\Space{H}{1}\), are
represented in~\eqref{eq:H1-schroedinger-rep-q-space} by the
(unbounded) operators \(M\) and \(D\) on
\(\FSpace{L}{2}(\Space{R}{})\):
\begin{equation}
  \label{eq:lie-algebra-repres}
  \textstyle
  M=-\rmi q,\quad D=\myhbar\frac{d}{dq},\quad \text{with the
    commutator} \quad [M,D]= \rmi\myhbar I.
\end{equation}
In the Schr\"odinger model of quantum mechanics,
\(f(q)\in\FSpace{L}{2}(\Space{R}{})\) is interpreted as a wave
function (a state) of a particle, with \(M\) and \(D\) are the
observables of its coordinate and momentum.
\begin{cor}[Heisenberg--Kennard uncertainty relation]
  For the coordinate \(M\) and momentum \(D\) observables we have the
  \emph{Heisenberg--Kennard uncertainty} relation:
  \begin{equation}
    \label{eq:heisenberg-uncertainty}
    \Delta_\phi(M) \cdot \Delta_\phi(D) \geq \frac{\myh}{2}.
  \end{equation}
  The equality holds if and only if
  \(\phi(q)=e^{-c q^2}\), \(c\in\Space[+]{R}{}\) is the vacuum state 
  in the Schr\"odinger model.
\end{cor}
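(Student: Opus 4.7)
The plan is to apply Theorem~\ref{th:uncertainty} directly with $A=M$, $B=D$, $a=\bar M$, $b=\bar D$. The content of the statement is then almost entirely packaged into the commutator $[M,D]=\rmi\myhbar I$ from~\eqref{eq:lie-algebra-repres}. First I would compute, for a unit vector $\phi$,
\[
\modulus{\scalar{(MD-DM)\phi}{\phi}}=\modulus{\scalar{\rmi\myhbar\phi}{\phi}}=\myhbar\norm{\phi}^2=\myhbar,
\]
which, substituted into~\eqref{eq:uncertainty-abstract}, yields
\[
\Delta_\phi(M)\cdot\Delta_\phi(D)
=\norm{(M-\bar M)\phi}\,\norm{(D-\bar D)\phi}
\geq \tfrac{\myh}{2},
\]
after noting that $\myh=2\pi\myhbar$ (so one must keep careful track of the $2\pi$ convention coming from the form of the Schr\"odinger representation~\eqref{eq:H1-schroedinger-rep-q-space}).

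For the equality case I would invoke the second clause of Theorem~\ref{th:uncertainty}: the inequality is saturated precisely when $\phi$ satisfies $((M-a)+\rmi r(D-b))\phi=0$ for some real $r$. Since the corollary identifies the \emph{vacuum} (the Gaussian centred at the origin in both position and momentum) we may take $a=b=0$; more general coherent states $\rme^{-c(q-q_{0})^{2}+\rmi p_{0}q}$ correspond to nonzero $a$, $b$. Substituting the explicit formulas for $M$ and $D$ from~\eqref{eq:lie-algebra-repres}, the equation becomes the first-order linear ODE
\[
-\rmi q\,\phi(q)+\rmi r\myhbar\,\phi'(q)=0,\qquad\text{i.e.}\qquad r\myhbar\,\phi'(q)=q\,\phi(q).
\]

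Solving this elementary ODE gives $\phi(q)=C\exp\!\bigl(q^{2}/(2r\myhbar)\bigr)$, and setting $c=-1/(2r\myhbar)$ produces the claimed form $\phi(q)=C\rme^{-c q^{2}}$. The only substantive point to check is that $\phi\in\FSpace{L}{2}(\Space{R}{})$ forces the coefficient of $q^{2}$ to be negative, equivalently $r\myhbar<0$, which is the content of $c\in\Space[+]{R}{}$.

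There is no real obstacle here — the work has already been done by Theorem~\ref{th:uncertainty}. The only delicate piece is bookkeeping: reconciling the $\myh=2\pi\myhbar$ that appears when one translates the commutator $[M,D]=\rmi\myhbar I$ into the standard textbook form of the Heisenberg--Kennard bound, and observing that the ``vacuum'' normalisation amounts to the choice $a=b=0$.
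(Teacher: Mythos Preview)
Your argument is essentially identical to the paper's: apply Theorem~\ref{th:uncertainty} to the commutator $[M,D]=\rmi\myhbar I$ from~\eqref{eq:lie-algebra-repres}, then solve the resulting first-order ODE to obtain the Gaussian and impose square-integrability to force $c>0$. One remark on your bookkeeping comment: the commutator actually produces the lower bound $\myhbar/2$, not $\myh/2=\pi\myhbar$, so the appearance of $\myh$ in~\eqref{eq:heisenberg-uncertainty} is a typographical slip in the paper rather than a $2\pi$ that can be tracked to reconciliation.
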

\begin{proof}
  The relation follows from the commutator \([M,D]=\rmi\myhbar I\),
  which, in turn, is the representation of the Lie algebra
  \(\algebra{h}_1\) of the Heisenberg group. The minimal uncertainty
  state in the Schrodinger representation is a solution of the
  differential equation: \((M-\rmi r D)\phi=0\) for some
  \(r\in\Space{R}{}\), or, explicitly:
  \begin{equation}
    \label{eq:gaussian-uncertainty-eq}
    (M-\rmi r D)\phi=-\rmi\left(q+r{\myhbar}\frac{d}{dq}\right)\phi(q)=0.
  \end{equation}
  The solution is the Gaussian \(\phi(q)=e^{-c q^2}\),
  \(c=\frac{1}{2r\myhbar}\). For \(c>0\), this function  is in
  the state space \(\FSpace{L}{2}(\Space{R}{})\). 
\end{proof}
It is common to say that the Gaussian \(\phi(q)=e^{-c q^2}\) represents
the ground state, which minimises the uncertainty of coordinate and
momentum. 

\section{Wavelet transform and analyticity}
\label{sec:gaussian}

\subsection{Induced wavelet transform}
\label{sec:induc-wavel-transf}

The following object is common in quantum
mechanics~\cite{Kisil02e}, signal processing, harmonic
analysis~\cite{Kisil12d}, operator theory~\cites{Kisil12b,Kisil13a}
and many other areas~\cite{Kisil11c}.  Therefore, it has various
names~\cite{AliAntGaz00}: coherent states, wavelets, matrix
coefficients, etc.  In the most fundamental
situation~\cite{AliAntGaz00}*{Ch.~8}, we start from an irreducible
unitary representation \(\uir{}{}\) of a Lie group \(G\) in a Hilbert
space \(\FSpace{H}{}\). For a vector \(f\in \FSpace{H}{}\) (called mother wavelet, vacuum
state, etc.), we define the map \(\oper{W}_f\) from \(\FSpace{H}{}\) to a space
of functions on \(G\) by:
\begin{equation}
  \label{eq:wavelet-trans}
  [\oper{W}_f v](g)=\tilde{v}(g):=\scalar{v}{\uir{}{}(g)f}.
\end{equation}
Under the above assumptions, \(\tilde{v}(g)\) is a bounded continuous
function on \(G\). The map \(\oper{W}_f\) intertwines \(\uir{}{}(g)\)
with the left shifts on \(G\):
\begin{equation}
  \label{eq:left-shift-itertwine}
  \oper{W}_f \circ \uir{}{}(g) =\Lambda(g)\circ   \oper{W}_f ,\qquad
  \text{ where } \Lambda(g):\tilde{v}(g')\mapsto
  \tilde{v}(g^{-1}g').
\end{equation}
Thus, the image \(\oper{W}_f \FSpace{H}{}\) is invariant under the left shifts on
\(G\). If \(\uir{}{}\) is square integrable and \(f\) is
admissible~\cite{AliAntGaz00}*{\S~8.1}, then \(\tilde{v}(g)\) is
square-integrable with respect to the Haar measure on \(G\). At this
point, none of admissible vectors has an advantage over others.

It is common~\cite{Kisil11c}*{\S~5.1}, that there exists a closed subgroup
\(H\subset G\) and a respective \(f\in \FSpace{H}{}\) such that \(\uir{}{}(h)
f=\chi(h) f\) for some character \(\chi\) of \(H\). In this case, it
is enough to know values of \(\tilde{v}(\map{s}(x))\), for any
continuous section \(\map{s}\) from the homogeneous space \(X=G/H\) to
\(G\). The map \(v \mapsto \tilde{v}(x)=\tilde{v}(\map{s}(x))\)
intertwines \(\uir{}{}\) with the representation \(\uir{}{\chi}\) in a
certain function space on \(X\) induced by the character \(\chi\) of
\(H\)~\cite{Kirillov76}*{\S~13.2}. We call the map \(\oper{W}_f: v
\mapsto \tilde{v}(x)\) the \emph{induced wavelet
  transform}~\cite{Kisil11c}*{\S~5.1}.

For example, if \(G=\Space{H}{1}\), \(H=\{(s,0,0)\in\Space{H}{1}:\
s\in\Space{R}{}\}\) and its character
\(\chi_\myhbar(s,0,0)=e^{2\pi\rmi\myhbar s}\), then any vector
\(f\in\FSpace{L}{2}(\Space{R}{})\) satisfies \(\uir{}{\myhbar}(s,0,0)
f=\chi_\myhbar(s) f\) for the
representation~\eqref{eq:H1-schroedinger-rep-q-space}. Thus, we still
do not have a reason to prefer any admissible vector to
others. 

\subsection{Right shifts and analyticity}

To discover some preferable mother wavelets, we use the following a
general result from~\cite{Kisil11c}*{\S~5}.  Let \(G\) be a locally
compact group and \(\uir{}{}\) be its representation in a Hilbert
space \(\FSpace{H}{}\). Let \([\oper{W}_f v](g)=\scalar{v}{\uir{}{}(g)f}\) be the
wavelet transform defined by a vacuum state \(f\in \FSpace{H}{}\).  Then, the
right shift \(R(g): [\oper{W}_f v](g')\mapsto [\oper{W}_f v](g'g)\) for
\(g\in G\) coincides with the wavelet transform
\([\oper{W}_{f_g} v](g')=\scalar{v}{\uir{}{}(g')f_g}\) defined by the vacuum
state \(f_g=\uir{}{}(g) f\).  In other words, the covariant transform
intertwines%
\index{intertwining operator}%
\index{operator!intertwining} right shifts on the group \(G\) with the
associated action \(\uir{}{}\) on vacuum states,
cf.~\eqref{eq:left-shift-itertwine}:
\begin{equation}
  \label{eq:wave-intertwines-right}
  R(g) \circ \oper{W}_f= \oper{W}_{\uir{}{}(g)f}.
\end{equation}
Although, the above observation is almost trivial, applications of the following
corollary are not.
\begin{cor}[Analyticity of the wavelet transform,~\cite{Kisil11c}*{\S~5}] 
  \label{co:cauchy-riemann-integ}
  Let \(G\) be a group and \(dg\) be a measure on \(G\). Let
  \(\uir{}{}\) be a unitary representation of \(G\), which can be
  extended by integration to a vector space \(V\) of functions or
  distributions on \(G\).  Let a mother wavelet \(f\in \FSpace{H}{}\) satisfy the
  equation
  \begin{displaymath}
    \int_{G} a(g)\, \uir{}{}(g) f\,dg=0,
  \end{displaymath}
  for a fixed distribution \(a(g) \in V\). Then  any wavelet transform
  \(\tilde{v}(g)=\scalar{v}{\uir{}{}(g)f}\) obeys the condition:
  \begin{equation}
     \label{eq:dirac-op}
     D\tilde{v}=0,\qquad \text{where} \quad D=\int_{G} \bar{a}(g)\, R(g) \,dg,
  \end{equation}
  with \(R\) being the right regular representation of \(G\).
\end{cor}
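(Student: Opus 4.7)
The plan is to chain together the two defining identities so that the hypothesis on $f$ is transported to a constraint on $\tilde{v}$ via the intertwining relation~\eqref{eq:wave-intertwines-right}. First I would unwind what $D\tilde{v}$ means pointwise: for every $g'\in G$,
\begin{equation*}
  [D\tilde{v}](g') = \int_{G} \bar{a}(g)\, [R(g)\tilde{v}](g')\,dg
  = \int_{G} \bar{a}(g)\, \tilde{v}(g'g)\,dg,
\end{equation*}
using the very definition of the right regular representation on the image of the wavelet transform.

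Next I would use~\eqref{eq:wave-intertwines-right}, or equivalently the direct calculation $\tilde{v}(g'g)=\scalar{v}{\uir{}{}(g'g)f}=\scalar{v}{\uir{}{}(g')\uir{}{}(g)f}$, to rewrite this as
\begin{equation*}
  [D\tilde{v}](g') = \int_{G} \bar{a}(g)\,\scalar{v}{\uir{}{}(g')\uir{}{}(g)f}\,dg.
\end{equation*}
Because the inner product is conjugate-linear in the second slot, the scalar $\bar{a}(g)$ may be absorbed into the second argument as $a(g)$. Pulling the $g'$-independent operator $\uir{}{}(g')$ out of the integral then gives
\begin{equation*}
  [D\tilde{v}](g') = \scalar[\FSpace{H}{}]{v}{\uir{}{}(g')\int_{G} a(g)\,\uir{}{}(g)f\,dg}.
\end{equation*}
The hypothesis on $f$ asserts precisely that the inner integral vanishes, so $[D\tilde{v}](g')=0$ for every $g'$, which is~\eqref{eq:dirac-op}.

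The only real obstacle is justifying the two interchanges: moving the integral past the inner product, and past $\uir{}{}(g')$. This is exactly what the hypothesis ``\(\uir{}{}\) can be extended by integration to \(V\)'' is designed to cover, since it guarantees that $\int_G a(g)\,\uir{}{}(g)f\,dg$ makes sense as an element of $\FSpace{H}{}$ (or, more generally, of the distributional completion against which $v$ pairs continuously), and the analogous statement for $\int_G \bar{a}(g)\,R(g)\,dg$ makes the right-hand operator $D$ well defined on $\oper{W}_f\FSpace{H}{}$. Once these integrals are interpreted in the appropriate weak sense, the conjugate-linearity of $\scalar{\cdot}{\cdot}$ and the boundedness of $\uir{}{}(g')$ make both interchanges automatic, and the proof reduces to the one-line computation above.
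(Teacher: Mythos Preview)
Your argument is correct and is exactly the unwinding the paper has in mind: the corollary is stated without proof in the paper, but the sentence preceding it (``Although, the above observation is almost trivial\ldots'') together with the intertwining identity~\eqref{eq:wave-intertwines-right} makes clear that the intended justification is precisely the chain $[R(g)\oper{W}_f v](g')=[\oper{W}_{\uir{}{}(g)f}v](g')=\scalar{v}{\uir{}{}(g')\uir{}{}(g)f}$ that you spell out. Your remarks on the interchange of integral and inner product are also appropriate and match the role of the hypothesis on~$V$.
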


Some applications (including discrete one) produced by the \(ax+b\) group
can be found in~\cite{Kisil12d}*{\S~6}.  We turn to the Heisenberg
group now.
\begin{example}[Gaussian and FSB transform]
  \label{ex:gaussian-fsb}
  The Gaussian \(\phi(x)=e^{- c q^2/2}\) is a null-solution of the
  operator \(\myhbar c M-\rmi D\). For the centre \(Z=\{(s,0,0):\
  s\in\Space{R}{}\}\subset \Space{H}{1}\), we define the section
  \(\map{s}:\Space{H}{1}/Z\rightarrow \Space{H}{1}\) by
  \(\map{s}(x,y)=(0,x,y)\). Then, the corresponding induced wavelet
  transform is:
  \begin{equation}
    \label{eq:fsb-transform}
   \tilde{v}(x,y)=\scalar{v}{\uir{}{}(\map{s}(x,y))f}= \int_{\Space{R}{}} v(q)\, e^{\pi\rmi\myhbar xy
    -2\pi\rmi x q}\,e^{-c(q+{\myhbar} y)^2/2}\,dq.
  \end{equation}
  The infinitesimal generators \(X\) and \(Y\) of one-parameters subgroups
  \((0,t/(2\pi),0)\) and \((0,0,t)\) are represented through the right
  shift in~\eqref{eq:H-n-group-law} by
  \begin{displaymath}
    \textstyle
    R_*(X)=-\frac{1}{4\pi}y\partial_s+\frac{1}{2\pi}\partial_x,\quad
    R_*(Y)=\frac{1}{2}x\partial_s+\partial_y.
  \end{displaymath}
  For the representation induced
  by the character \(\chi_\myhbar(s,0,0)=e^{2\pi\rmi\myhbar s}\) we
  have \(\partial_s= 2\pi\rmi\myhbar
  I\). Cor.~\ref{co:cauchy-riemann-integ} ensures that the operator
  \begin{equation}
    \label{eq:fsb-cauchy-riemann}
    \myhbar c\cdot R_*(X)+\rmi\cdot R_*(Y)=
   -\frac{\myhbar}{2}(2\pi x+ {\rmi\myhbar c}y) +\frac{\myhbar
      c}{2\pi}\partial_x +\rmi \partial_y
  \end{equation}
  annihilate any \(\tilde{v}(x,y)\) from~\eqref{eq:fsb-transform}.
  The integral~\eqref{eq:fsb-transform} is known as
  Fock--Segal--Bargmann (FSB) transform and in the most common case
  the values \(\myhbar=1\) and \(c=2\pi\) are used. For these,
  operator~\eqref{eq:fsb-cauchy-riemann} becomes \(-\pi(x+\rmi
  y)+(\partial_x+\rmi\partial _y)=-\pi z+2\partial_{\bar{z}}\) with
  \(z=x+\rmi y\). Then
  the function \(V(z)=e^{\pi z
    \bar{z}/2}\,\tilde{v}(z)=e^{\pi(x^2+y^2)/2}\, \tilde{v}(x,y)\)
  satisfies the Cauchy--Riemann equation \(\partial_{\bar{z}}
  V(z)=0\).
\end{example}
This example
shows, that the Gaussian is a preferred vacuum state (as producing
analytic functions through FSB transform) exactly for the same reason
as being the minimal uncertainty state: the both are derived from the
identity \((\myhbar c M+\rmi D)e^{- c q^2/2}=0\).

\subsection{Uncertainty and analyticity}
\label{sec:uncert-analyt}
The main result of this paper is a generalisation of the previous
observation, which bridges together Cor.~\ref{co:cauchy-riemann-integ} and
Thm.~\ref{th:uncertainty}. Let \(G\), \(H\), \(\uir{}{}\) and
\(\FSpace{H}{}\) be
as before. Assume, that the homogeneous space
\(X=G/H\) has a (quasi-)invariant measure
\(d\mu(x)\)~\cite{Kirillov76}*{\S~13.2}. Then, for a function (or a
suitable distribution) \(k\) on \(X\) we can define the integrated
representation:
\begin{equation}
  \label{eq:integrated-rep}
  \uir{}{}(k)=\int_X k(x)\uir{}{}(\map{s}(x))\,d\mu(x),
\end{equation}
which is (possibly, unbounded) operators on (possibly, dense subspace
of) \(\FSpace{H}{}\). In particular, \(R(k)\) denotes the
integrated right shifts,  for \(H=\{e\}\).

\begin{thm}
  Let \(k_1\) and \(k_2\) be two distributions on
  \(X\) with the respective integrated representations
  \(\uir{}{}(k_1)\) and \(\uir{}{}(k_2)\).  The following are
  equivalent:
  \begin{enumerate}
  \item A vector \(f\in \FSpace{H}{}\) satisfies the identity
    \begin{displaymath}
      \Delta_f(\uir{}{}(k_1))\cdot
    \Delta_f(\uir{}{}(k_2))=\modulus{\scalar{[\uir{}{}(k_1),\uir{}{}(k_1)]
      f}{f}}.
    \end{displaymath}
  \item The image of the wavelet transform \(\oper{W}_f: v \mapsto
    \tilde {v}(g)=\scalar{v}{\uir{}{}(g)f}\) consists of functions
    satisfying the equation \(R(k_1+\rmi r k_2) \tilde {v}=0\) for
    some \(r\in\Space{R}{}\), where \(R\) is the integrated
    form~\eqref{eq:integrated-rep} of the right regular representation
    on \(G\).
  \end{enumerate}
\end{thm}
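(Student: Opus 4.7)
The plan is to combine Theorem~\ref{th:uncertainty}, which characterises uncertainty-equality states as joint eigenvectors of a complex linear combination of $A$ and $B$, with the integrated form of the intertwining relation~\eqref{eq:wave-intertwines-right} already exploited in Corollary~\ref{co:cauchy-riemann-integ}.

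First, I would apply Theorem~\ref{th:uncertainty} to the self-adjoint operators $A=\uir{}{}(k_1)$ and $B=\uir{}{}(k_2)$. The equality clause of that theorem says (i) holds precisely when $((A-a)+\rmi r(B-b))f=0$ for some $r\in\Space{R}{}$, with $a=\bar A$ and $b=\bar B$. Absorbing the constants $a,b$ into the distributions (formally by subtracting multiples of the Dirac mass at the identity coset of $X$ from each $k_i$), condition (i) is recast as the single vacuum identity $\uir{}{}(k_1+\rmi r k_2)f=0$.

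Second, I would upgrade~\eqref{eq:wave-intertwines-right} to an integrated statement along the lines of Corollary~\ref{co:cauchy-riemann-integ}: pulling the integral through the inner product, which is antilinear in its second slot, gives
\begin{displaymath}
[R(k)\tilde v](g')=\int_X k(x)\scalar{v}{\uir{}{}(g')\uir{}{}(\map{s}(x))f}\,d\mu(x)=\scalar{v}{\uir{}{}(g')\,\uir{}{}(\bar k)f}.
\end{displaymath}
Hence $R(k)\tilde v\equiv 0$ for every $v\in\FSpace{H}{}$ is equivalent to $\uir{}{}(\bar k)f=0$: letting $v$ range over $\FSpace{H}{}$ and specialising $g'$ to the identity coset forces $\uir{}{}(\bar k)f=0$. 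With $k=k_1+\rmi r k_2$ one has $\bar k=k_1-\rmi r k_2$, and because $r$ is a free real parameter in both (i) and (ii), the substitution $r\mapsto-r$ identifies the two vacuum identities and closes the equivalence.

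The main obstacle will be the bookkeeping surrounding these conjugations and domains: one must track the complex conjugate introduced when a distribution is moved through the antilinear slot of the inner product, absorb the expectation-value constants $\bar A$ and $\bar B$ cleanly into the distributions, and verify that $\uir{}{}(k_1)$ and $\uir{}{}(k_2)$ are genuinely self-adjoint on a common domain containing $f$ so that Theorem~\ref{th:uncertainty} applies without qualification. Once these points are settled, conditions (i) and (ii) are both encodings of the same algebraic identity $\uir{}{}(k_1+\rmi r k_2)f=0$ up to a harmless sign of $r$ and constant shifts, and the theorem follows.
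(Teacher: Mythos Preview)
Your proposal is correct and follows precisely the route taken in the paper: the paper's own proof is the single sentence ``This is an immediate consequence of a combination of Thm.~\ref{th:uncertainty} and Cor.~\ref{co:cauchy-riemann-integ},'' and you have simply unpacked that combination, including the conjugation bookkeeping and the absorption of the expectation values, in more detail than the author chose to record.
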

\begin{proof}
  This is an immediate consequence of a combination of
  Thm.~\ref{th:uncertainty} and Cor.~\ref{co:cauchy-riemann-integ}.
\end{proof}
Example~\ref{ex:gaussian-fsb} is a particular case of this theorem
with \(k_1(x,y)=\delta'_x(x,y)\) and \(k_2(x,y)=\delta'_y(x,y)\)
(partial derivatives of the delta function), which represent vectors
\(X\) and \(Y\) from the Lie algebra \(\algebra{h}_1\). The next
example will be of this type as well.

\subsection{Hardy space}
\label{sec:hardy-space}

Let \(\SU\) be the group of \(2\times 2\) complex matrices of the
form \(\begin{pmatrix}
  \alpha&\beta \\\bar{\beta}&\bar{\alpha}
\end{pmatrix}\) with the unit determinant
\(\modulus{\alpha}^2-\modulus{\beta}^2=1\). A standard basis in the
Lie algebra \(\algebra{su}_{1,1}\) is
\begin{displaymath}
  A=\frac{1}{2}
  \begin{pmatrix}
    0&-\rmi\\\rmi&0
  \end{pmatrix},\quad
  B=\frac{1}{2}
  \begin{pmatrix}
    0&1\\1&0
  \end{pmatrix},\quad
  Z=
  \begin{pmatrix}
    \rmi&0\\0&-\rmi
  \end{pmatrix}.
\end{displaymath}
The respective one-dimensional subgroups consist of matrices:
\begin{displaymath}
  e^{tA}=
  \begin{pmatrix}
    \cosh \frac{t}{2} & -\rmi\sinh \frac{t}{2}\\ \rmi\sinh \frac{t}{2}& \cosh \frac{t}{2}
  \end{pmatrix},\ 
  e^{tB}=
  \begin{pmatrix}
    \cosh \frac{t}{2} & \sinh \frac{t}{2}\\ \sinh \frac{t}{2}& \cosh \frac{t}{2}
  \end{pmatrix},\ 
  e^{tZ}=
  \begin{pmatrix}
    e^{\rmi t}&0\\0&e^{-\rmi t}
  \end{pmatrix}.
\end{displaymath}
The last subgroup---the maximal compact subgroup of \(\SU\)---is
usually denoted by \(K\).  The commutators of the
\(\algebra{su}_{1,1}\) basis elements are
\begin{equation}
  \label{eq:su11-commutators}
  [Z,A]=2B, \qquad [Z,B]=-2A, \qquad [A,B]=- \frac{1}{2} Z.
\end{equation}

Let \(\Space{T}{}\) denote the unit circle in \(\Space{C}{}\) with
the rotation-invariant measure. The
mock discrete representation of \(\SU\)~\cite{Lang85}*{\S~VI.6} acts
on \(\FSpace{L}{2}(\Space{T}{})\) by unitary transformations
\begin{equation}
  \label{eq:SU-action}
  [\uir{}{1}(g) f](z) 
  = \frac{1}{(\bar{\beta}z+\bar{\alpha}) }\, 
  f\left(\frac{{\alpha}z+\beta }{\bar{\beta} z+\bar{\alpha}    }\right),
  \qquad
  g^{-1}=
  \begin{pmatrix}
    \alpha&\beta\\
    \bar{\beta}&\bar{\alpha}
  \end{pmatrix}.
\end{equation}
The respective derived representation \(\uir{}{1*}\) of the
\(\algebra{su}_{1,1}\) basis is:
\begin{displaymath}
\textstyle
  \uir{A}{1*}= \frac{\rmi}{2}(z+(z^2+1)\partial_z),\quad
  \uir{B}{1*}= \frac{1}{2}(z+(z^2-1)\partial_z),\quad
  \uir{Z}{1*}=-\rmi I-2\rmi z\partial_z.
\end{displaymath}
Thus, \(\uir{B+\rmi A}{1*}=-\partial_z\) and the function
\(f_+(z)\equiv 1\) satisfies \(\uir{B+\rmi A}{1*} f_+=0\). Recalling
the commutator \([A,B]=-\frac{1}{2}Z\) we note that
\(\uir{}{1}(e^{tZ})f_+=e^{\rmi t} f_+\). Therefore, there is the
following identity for dispersions on this state:
\begin{displaymath}
  \textstyle
  \Delta_{f_+}(\uir{A}{1*})\cdot \Delta_{f_+}(\uir{B}{1*}) =\frac{1}{2}, 
\end{displaymath}
with the minimal value of uncertainty among all eigenvectors of the operator
\(\uir{}{1}(e^{tZ})\).  

Furthermore, the vacuum state \(f_+\) generates the induced wavelet
transform for the subgroup \(K=\{e^{tZ} \such t\in\Space{R}{}\}\).  We
identify \(\SU/K\) with the open unit disk%
\index{unit!disk}%
\index{disk!unit} \(D=\{w\in\Space{C}{} \such \modulus{w}<1
\}\)~\citelist{\cite{Kisil11c}*{\S~5.5} \cite{Kisil13a}}.  The map
\(\map{s}: \SU/K \rightarrow \SU\) is defined as
\(\map{s}(w)=\frac{1}{\sqrt{1-\modulus{w}^2}}
\begin{pmatrix}
  1&w\\
  \bar{w}&1
\end{pmatrix}\). Then, the induced wavelet transform is:
\begin{align*}
  \tilde{v}(w)=\scalar{v}{\uir{}{1}(\map{s}(w)) f_+}&=
  \frac{1}{2\pi\sqrt{1-\modulus{w}^2}}\int_{\Space{T}{}}
  \frac{v(e^{\rmi \theta})\,d\theta}{1-w e^{-\rmi \theta}}\\
  &=
    \frac{1}{2\pi\rmi\sqrt{1-\modulus{w}^2}}\int_{\Space{T}{}}
    \frac{v(e^{\rmi \theta})\,de^{\rmi \theta}}{e^{\rmi \theta}-w }. 
\end{align*}
Clearly, this is the Cauchy integral up to the factor
\(\frac{1}{\sqrt{1-\modulus{w}^2}}\), which presents the conformal
metric on the unit disk.  Similarly, we can consider the operator
\(\uir{B-\rmi A}{1*}=z+z^2\partial_z\) and the function
\(f_-(z)=\frac{1}{z}\) simultaneously solving the equations
\(\uir{B-\rmi A}{1*} f_-=0\) and \(\uir{}{1}(e^{tZ})f_-=e^{-\rmi t}
f_-\). It produces the integral with the conjugated Cauchy kernel.

Finally, we can calculate the operator~\eqref{eq:dirac-op}
annihilating the image of the wavelet transform. In the coordinates
\((w,t)\in (\SU/K)\times K\), the restriction to the induced
subrepresentation is, cf.~\cite{Lang85}*{\S~IX.5}:
\begin{align*}
\linv{ B-\rmi  A}{}
= e^{2 \rmi  t} (-\frac{1}{2} w+(1- \modulus{w}^2) \partial_{\bar{w}}).
\end{align*}
Furthermore, if \(\linv{B- \rmi A}{}\tilde{v}(w)=0\), then
\(\partial_{\bar{w}}(\sqrt{1-w\bar{w}}\cdot \tilde{v}(w))=0\). That is,
\(V(w)=\sqrt{1-w\bar{w}}\cdot \tilde{v}(w)\) is a holomorphic
function on the unit disk.

Similarly, we can treat representations of \(\SU\) in the space of
square integrable functions on the unit disk. The irreducible
components of this representation are isometrically
isomorphic~\cite{Kisil11c}*{\S~4--5} to the weighted Bergman spaces of
(purely poly-)analytic functions on the unit, cf.~\cite{Vasilevski99}.


\small
\providecommand{\noopsort}[1]{} \providecommand{\printfirst}[2]{#1}
  \providecommand{\singleletter}[1]{#1} \providecommand{\switchargs}[2]{#2#1}
  \providecommand{\irm}{\textup{I}} \providecommand{\iirm}{\textup{II}}
  \providecommand{\vrm}{\textup{V}} \providecommand{\cprime}{'}
  \providecommand{\eprint}[2]{\texttt{#2}}
  \providecommand{\myeprint}[2]{\texttt{#2}}
  \providecommand{\arXiv}[1]{\myeprint{http://arXiv.org/abs/#1}{arXiv:#1}}
  \providecommand{\doi}[1]{\href{http://dx.doi.org/#1}{doi:
  #1}}
\begin{bibdiv}
\begin{biblist}

\bib{AliAntGaz00}{book}{
      author={Ali, Syed Twareque},
      author={Antoine, Jean-Pierre},
      author={Gazeau, Jean-Pierre},
       title={Coherent states, wavelets and their generalizations},
      series={Graduate Texts in Contemporary Physics},
   publisher={Springer-Verlag},
     address={New York},
        date={2000},
        ISBN={0-387-98908-0},
      review={\MR{2002m:81092}},
}

\bib{Folland89}{book}{
      author={Folland, Gerald B.},
       title={Harmonic analysis in phase space},
      series={Annals of Mathematics Studies},
   publisher={Princeton University Press},
     address={Princeton, NJ},
        date={1989},
      volume={122},
        ISBN={0-691-08527-7; 0-691-08528-5},
      review={\MR{92k:22017}},
}

\bib{Kirillov76}{book}{
      author={Kirillov, A. A.},
       title={Elements of the theory of representations},
   publisher={Springer-Verlag},
     address={Berlin},
        date={1976},
        note={Translated from the Russian by Edwin Hewitt, Grundlehren der
  Mathematischen Wissenschaften, Band 220},
      review={\MR{54 \#447}},
}

\bib{Kisil02e}{article}{
      author={Kisil, Vladimir V.},
       title={{$p$}-{M}echanics as a physical theory: an introduction},
        date={2004},
        ISSN={0305-4470},
     journal={J. Phys. A},
      volume={37},
      number={1},
       pages={183\ndash 204},
        note={\arXiv{quant-ph/0212101},
  \href{http://stacks.iop.org/0305-4470/37/183}{On-line}. \Zbl{1045.81032}},
      review={\MR{MR2044764 (2005c:81078)}},
}

\bib{Kisil11c}{incollection}{
      author={Kisil, Vladimir V.},
       title={{E}rlangen programme at large: an {O}verview},
        date={2012},
   booktitle={Advances in applied analysis},
      editor={Rogosin, S.V.},
      editor={Koroleva, A.A.},
   publisher={Birkh\"auser Verlag},
     address={Basel},
       pages={1\ndash 94},
        note={\arXiv{1106.1686}},
}

\bib{Kisil10a}{article}{
      author={Kisil, Vladimir V.},
       title={Hypercomplex representations of the {H}eisenberg group and
  mechanics},
        date={2012},
        ISSN={0020-7748},
     journal={Internat. J. Theoret. Phys.},
      volume={51},
      number={3},
       pages={964\ndash 984},
         url={http://dx.doi.org/10.1007/s10773-011-0970-0},
        note={\arXiv{1005.5057}. \Zbl{1247.81232}},
      review={\MR{2892069}},
}

\bib{Kisil12b}{article}{
      author={Kisil, Vladimir V.},
       title={Operator covariant transform and local principle},
        date={2012},
     journal={J. Phys. A: Math. Theor.},
      volume={45},
       pages={244022},
        note={\arXiv{1201.1749}.
  \href{http://stacks.iop.org/1751-8121/45/244022}{On-line}},
}

\bib{Kisil12d}{article}{
      author={Kisil, Vladimir V.},
       title={The real and complex techniques in harmonic analysis from the
  covariant transform},
        date={2012},
        note={\arXiv{1209.5072}},
}

\bib{Kisil13a}{article}{
      author={Kisil, Vladimir V.},
       title={Calculus of operators: {C}ovariant transform and relative
  convolutions},
journal={Banach J. Math. Anal.}
      volume={8},
      number={2},
        date={2014},
        note={\arXiv{1304.2792}},
}

\bib{Lang85}{book}{
      author={Lang, Serge},
       title={{${\rm SL}\sb 2({\bf R})$}},
      series={Graduate Texts in Mathematics},
   publisher={Springer-Verlag},
     address={New York},
        date={1985},
      volume={105},
        ISBN={0-387-96198-4},
        note={Reprint of the 1975 edition},
      review={\MR{803508 (86j:22018)}},
}

\bib{Vasilevski99}{article}{
      author={Vasilevski, N. L.},
       title={On the structure of {B}ergman and poly-{B}ergman spaces},
        date={1999},
        ISSN={0378-620X},
     journal={Integral Equations Operator Theory},
      volume={33},
      number={4},
       pages={471\ndash 488},
         url={http://dx.doi.org/10.1007/BF01291838},
      review={\MR{1682807 (2000g:46032)}},
}

\end{biblist}
\end{bibdiv}

\end{document}